\documentclass[a4paper,11pt]{extarticle}

\usepackage{fullpage,amsmath,amsthm,amssymb,url}

\usepackage[linesnumbered,vlined,ruled]{algorithm2e}
\usepackage{mathscinet} 

\newtheorem{theorem}{Theorem}
\newtheorem{lemma}[theorem]{Lemma}
\newtheorem{proposition}[theorem]{Proposition}
\newtheorem*{thmone}{Theorem \ref{thm:22}}
\newtheorem*{thmtwo}{Theorem \ref{thm:coprime}}

\theoremstyle{definition}
\newtheorem{question}{Question}

\theoremstyle{remark}
\newtheorem{case}{Case}

\newcommand{\ie}{i.e.\ }
\newcommand{\hcf}{\operatorname{hcf}}

\SetKw{see}{See}
\SetKw{dec}{Declare}
\SetKw{flip}{Flip}
\SetKw{rpttext}{Repeat}
\SetKw{wotext}{Without}
\SetKw{eltext}{Else}
\SetKw{pass}{Pass}
\SetKwProg{rpt}{Repeat}{}{}
\SetKwIF{If}{ElseIf}{Else}{Without}{}{}{Else}{}

\usepackage{authblk}
\title{A parallel wakeup problem and multi-room light switch strategies}
\author[1]{John Haslegrave}
\author[2]{Paul A. Russell}
\author[3]{Mark Walters}
\affil[1]{Lancaster University}
\affil[2]{Churchill College, Cambridge}
\affil[3]{Queen Mary University of London}
\begin{document}
	\maketitle
	
	\begin{abstract}
		The wakeup problem in distributed computing asks for a symmetric protocol that enables one of several processors to eventually guarantee that all (or, in a more general setting, enough) other processors have acted, using a shared register but no global clock. Dropping the symmetry requirement gives a well-known exercise often phrased in terms of prisoners entering, in an unknown sequence, a room equipped with a single binary switch, and using it to communicate. Kane and Kominers recently analysed a more general version of the latter with multiple parallel and indistinguishable rooms. We answer some open questions of Kane and Kominers regarding the minimum number of switch states needed for the prisoners to solve the problem. We also consider the symmetric ``wakeup'' version of this scenario, and establish exactly for which numbers of processors and registers a solution is possible.
	\end{abstract}
	
	\section{Introduction}
	The wakeup problem in distributed computing was introduced by Fischer, Moran, Rudich and Taubenfeld \cite{FMRT90,FMRT96}, motivated by the fact that previous assumptions of processes acting synchronously and having unique names known to one another are not realistic. It requires a protocol for $n$ identical processes acting completely asynchronously in a shared-memory environment, such that eventually at least $p$ processes will know that at least $\tau$ processes have ``woken up'', i.e.\ begun the protocol, and this is resilient to the failure of up to $t$ processes. In the simplest case, which we restrict our attention to here, we set $p=1$, $\tau=n$ and $t=0$, i.e.\ processes are fully reliable, and at least one process must know that all of them have woken up. They show that this can be done with a shared memory of just two bits. Variants of the wakeup problem have since been studied in a variety of contexts; see e.g.\ \cite{AGM02,JS05,CGK07}.
	
	The requirement that processes act identically arises naturally in self-stabilisation problems, introduced earlier by Dijkstra \cite{Dij74}, where there may be no natural ``leader'' and the agents need not know each other's identity, meaning that they cannot initially agree on any sort of priority between themselves. Indeed, establishing priority may be precisely the goal of self-stabilisation, as in Herman's algorithm \cite{Her90}.
	Dropping this requirement, while retaining the asynchronicity, enables the wakeup problem to be solved with only one bit of shared memory, using a predetermined leader that will eventually learn that all others have woken up. This observation appears to be the basis for a problem with a more down-to-earth setting that began circulating some years after the introduction of the wakeup problem. It appeared with solution on IBM's website \textit{Ponder This} \cite{ponder}, subsequently was featured in Berlekamp and Buhler's column in \textit{The Emissary} \cite{BB02}, and was brought to the attention of a wider audience in Winkler's book \cite{Win04}.
	
	\begin{quote}A warden guards $n$ prisoners. There is a room with a single light switch; the warden leads prisoners into the room in some sequence, with the only restriction being that every prisoner must eventually visit the room arbitrarily many times. On visiting the room, a prisoner may observe the current state of the switch, and choose whether or not to change it. Prisoners do not have any other information; in particular, they cannot deduce anything about the number of visits that have occurred based on the passing of time. They win if one of them can eventually declare with certainty that all prisoners have visited the room.\end{quote} 
	
	It is not hard to show that the prisoners have a winning strategy for any value of $n$, whether or not the initial state of the switch is known. Winkler also discusses the question of symmetric strategies, in which all prisoners are required to follow the same protocol, relating this to the wakeup problem.
	
	Kane and Kominers \cite{KK21} recently analysed a generalisation of this problem to a setting with multiple, indistinguishable rooms, although with Chen they introduced a form of this problem (where the number of states can only be a power of $2$) as a puzzle much earlier \cite{KKC08}.
	
	In the generalised version considered in \cite{KK21}, there are $r>1$ indistinguishable 
	rooms each having $q$ possible states.
	The warden leads prisoners into rooms in some sequence, with the restriction that each prisoner must eventually visit each room arbitrarily many times. The initial state of the rooms is known (we will frequently assume that every room is initially in state $0$). The prisoners win if one of them can eventually deduce that all prisoners have visited all rooms. The reasons for specifying the problem in this way are that if the prisoners only needed to deduce that every prisoner has visited a room, they could easily adapt a winning strategy for the one-room case, whereas if the initial state is unknown then no winning strategy is possible \cite{KK21}.
	
	If a winning strategy uses randomness, we may determine the results of all possible random choices in advance to give a deterministic winning strategy. Thus we assume throughout that all strategies are deterministic.
	
	Kane and Kominers \cite{KK21} established the following results, for $n\geq 2$ prisoners in $r$ rooms with $q$ states.
	\begin{itemize}
		\item The prisoners always have a winning strategy if $q\geq 4$.
		\item The prisoners do not have a winning strategy if $q=2$ and $r\geq 3$.
		\item The prisoners have a winning strategy if $q=3$ and $r=2$.
		\item The prisoners have a winning strategy if $q=3$ and $n=2$.
	\end{itemize}
	The following were given as open questions.
	\begin{question}[Remark 11 of \cite{KK21}]\label{qu:22}Who wins in the case $r=q=2$?\end{question}
	\begin{question}[Section 4 of \cite{KK21}]\label{qu:r3}Do the prisoners have a winning strategy if $q=3$?\end{question}
	\begin{question}[Section 6.4 of \cite{KK21}]\label{qu:sym}What happens if the prisoners must use a symmetric
		strategy?\end{question}
	
	Our first result, proved in Sections \ref{sec:mono} and \ref{sec:22}, gives a complete answer to Question \ref{qu:22}.
	\begin{theorem}\label{thm:22}
		The prisoners do not have a winning strategy for $r=q=2$ (for any $n\geq 2$).
	\end{theorem}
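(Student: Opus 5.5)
We argue by contradiction: fix $n\ge 2$ prisoners and a deterministic strategy for two indistinguishable rooms with binary switches, both initially in state $0$. We will construct an adversarial schedule that either stops any prisoner from ever declaring, or makes some prisoner declare falsely. Since extra prisoners only help the warden (he can send them in at times when their actions are harmless, or reduce to a two-prisoner sub-scenario), we concentrate on two prisoners $A$ and $B$.

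The main tool is monotonicity, in line with the section heading ``mono''. We first argue that without loss of generality each prisoner's behaviour has a restricted form. A prisoner's knowledge is the sequence of states he has seen, and his action is a function of that history and the current state. The warden controls which room each prisoner enters. Because the rooms are indistinguishable, he can also pair visits so that a prisoner cannot tell which room he is in. We would show that a winning strategy can be replaced by one in which switches only ever move from $0$ to $1$ (a monotone strategy). The mechanism is this: any visit on which a prisoner flips $1\to 0$ can be ``absorbed'' by the warden, who routes the prisoner into the other room whenever that would leave the prisoner's observation unchanged. Alternatively, we would prove directly that the warden can keep both rooms equal as often as needed, so that room identity never leaks information.

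The key invariant is a pair of indistinguishable executions. The warden maintains two executions, $E_1$ (in which $B$ has visited both rooms) and $E_2$ (in which $B$ has visited only one room), such that $A$'s observation history is identical in both. He extends them step by step so that whenever $A$ would declare, $E_2$ shows the declaration is unjustified. Symmetrically, he keeps $B$ uncertain about whether $A$ has visited both rooms. With only four joint room states, $(0,0),(0,1),(1,0),(1,1)$, the room contents are unordered, so only three configurations are distinguishable. This leaves very little room for the strategy to encode information. For each configuration we do a case analysis on what a prisoner does on seeing $0$ or $1$, and the warden picks the room so as to keep the configuration in a small set. The fairness requirement, that every prisoner visits every room infinitely often, has to be satisfied along the way. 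We would meet it by scheduling rounds in which each prisoner visits both rooms, while checking that the invariant survives.

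The main obstacle is fairness combined with coupling. The warden must eventually let $B$ visit the second room in $E_2$ as well, so the two executions cannot be kept indistinguishable forever. Instead we would argue that some prisoner never declares at all in a single fair execution. Concretely, we would show that the multiset of room states reachable under the warden's control returns to a previously visited configuration, together with the prisoners' ``phases'', in a way that is consistent with a run in which some room has not yet been visited by someone. A pumping argument then shows that any declaration made on the cycle would also be made in an unfair prefix, which is a contradiction. We expect the delicate part to be the case analysis showing that, with $q=2$, the warden can always steer the configuration back into this cycle. This is exactly where $q=3$ differs.
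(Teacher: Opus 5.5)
Your proposal is a plan rather than a proof. The two-prisoner core is deferred to a case analysis you never carry out, and the tools you propose for it would not work as stated.

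\begin{itemize}
\item The pumping step needs prisoners to have finitely many ``phases''. A deterministic strategy is an arbitrary function of the entire observed history, so a repeated room configuration says nothing about future behaviour.
\item For the same reason, having only three unordered configurations does not limit what can be communicated. One binary switch in one room suffices for any number of prisoners in the classical problem, because information is carried by histories.
\item The reduction to strategies that only move switches from $0$ to $1$ is asserted, not proved. ``Absorbing'' a $1\to 0$ flip changes what the prisoner has seen and done. Since such strategies make at most two flips in total, this reduction would already contain essentially the whole theorem. (Section~\ref{sec:mono} is about monotonicity in the number of prisoners, not monotone strategies.)
\end{itemize}

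What replaces your two-execution invariant is an infinite family of fair schedules. All of them give each prisoner the same view, and the time by which everyone has visited every room is unbounded across the family.

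The missing idea is how to build such a family despite unbounded memory. Look at the sequence of states a prisoner produces when led repeatedly into one room.
\begin{itemize}
\item If both prisoners' sequences take a common value $x$ infinitely often, the warden runs rounds. In round $i$, Alice enters room $a_i$ until it is next in state $x$, then Bob enters room $3-a_i$ until it is next in state $x$. Each room a prisoner enters is in the state in which they left the previous one, so neither can learn anything about $a_1,a_2,\ldots$.
\item Otherwise, because $q=2$, one prisoner is eventually stuck in $0$ and the other in $1$. For each prisoner, the paper takes the number of resets of the room to the opposite state after which that prisoner no longer gets stuck again ($k$ for Alice, $\ell$ for Bob, possibly infinite). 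It then gives explicit warden schedules in the four cases $k=\ell=\infty$, $k>\ell$, $k<\ell$ and $k=\ell<\infty$.
\end{itemize}
This dichotomy is exactly where $q=2$ is used, and your sketch has no counterpart to it.

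Your reduction from $n$ prisoners to two is also not justified. Extra prisoners must visit every room infinitely often, and their visits need never be harmless, since their actions depend on their histories. The paper's Lemma~\ref{monotonic} instead merges $P_n$ and $P_{n+1}$ into one prisoner who, on each visit, acts as $P_n$ and then as $P_{n+1}$. A counter-schedule for the merged game then converts back by replacing each such visit with consecutive visits of $P_n$ and $P_{n+1}$ to the same room.
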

	
	We also make progress on Question \ref{qu:r3} by showing in Section \ref{sec:three} that four or fewer prisoners can always win, and that five prisoners can win unless the number of rooms is exactly three.
	
	Question \ref{qu:sym} can be thought of as a parallel version of the wakeup problem, using $r$ indistinguishable shared memory registers with the requirement that all processes have ``woken up'' with respect to all $r$ separately. However, for ease of reading, we will continue to use the terminology of light switches when addressing this question in Section \ref{sec:symm}. As previously noted, the requirement to use symmetric strategies is natural in self-stabilisation problems. In this paper we seek a strategy guaranteed to succeed even against an adversary who tries to counter it, which is the strongest form of self-stabilisation. However, often a strategy that works almost surely against a random adversary (``probabilistic stabilisation'') is acceptable in applications; see \cite{DTY15} for a discussion of the relationship between these types of stabilisation. 
	
	When restricted to symmetric strategies, it is not always possible for the prisoners to win even if the number of states is large. Our final result determines precisely the pairs $(n,r)$ such that the prisoners have a winning strategy for some $q$.
	\begin{theorem}\label{thm:coprime}
		If there are $n$ prisoners and $r$ rooms, the prisoners have a symmetric winning strategy for some $q$ if and only if $\hcf(n,r)=1$. Furthermore, there is some function $f$ such that for every pair $n,r$ with $\hcf(n,r)=1$, the prisoners have a winning strategy, starting from all rooms in state $0$, for any $q\geq f(\min(n,r))$.
	\end{theorem}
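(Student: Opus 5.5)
Both directions rest on symmetry and on controlling the global configuration of rooms.

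\emph{Necessity.} Suppose $d=\hcf(n,r)>1$ and the prisoners use a symmetric strategy, so every prisoner runs the same deterministic automaton. Split the prisoners into $d$ classes of size $n/d$ and the rooms into $d$ classes of size $r/d$. The adversary runs the process in rounds. In each round it schedules identical actions across classes: it picks a pattern of visits for class $1$ and repeats the same pattern, visit by visit and interleaved, in classes $2,\dots,d$. Prisoner $j$ of class $i$ visits room $k$ of class $i'$, where $i'$ is obtained from $i$ by a fixed cyclic shift; the shift varies from round to round, so that every prisoner eventually visits every room. We maintain the invariant that the configuration is invariant under the cyclic symmetry $\sigma$ that shifts both prisoner classes and room classes by one. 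This holds initially, since all rooms start in state $0$ and all prisoners are in the same initial memory state. Each block of $d$ simultaneous $\sigma$-related visits preserves it, because the $d$ visits touch distinct rooms and are performed by prisoners in identical memory states seeing identical room states. Now suppose some prisoner declares at some visit. At that moment its $\sigma$-images are in the same state, so they would declare at the corresponding visits too; that alone is harmless. The real issue is correctness. We must exhibit an alternative execution, indistinguishable to the declaring prisoner, in which some prisoner never visited some room. I would do this by working with the quotient system of $n/d$ prisoners and $r/d$ rooms. A $\sigma$-symmetric execution of the full system looks locally like an execution of the quotient in which prisoner $j$ ``visits room $k$'' without any record of which class. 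The execution can therefore be rearranged so that one whole class of prisoners never enters some class of rooms, while every view stays the same. Making this indistinguishability argument precise is the main obstacle in this direction. I would first try a direct lifting lemma: any finite execution of the quotient lifts to the full system in at least two ways that differ in which room-class copy each class uses.

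\emph{Sufficiency.} Assume $\hcf(n,r)=1$, and suppose first $n$ large relative to $r$ (the other case is symmetric in spirit, since only $\min(n,r)$ enters $f$). The plan is a symmetry-breaking phase followed by a counting phase, using the asymmetric strategies of Kane--Kominers with $q\ge4$ as a subroutine once roles are fixed. In the first phase, the room states serve as tokens. Each prisoner's first visit to any room deposits a ``mark''. Because $\hcf(n,r)=1$, a counting argument modulo $r$ shows the configuration cannot remain perfectly balanced across rooms. Concretely, tokens are passed so that rooms accumulate counts, and a room reaching a threshold that is a function of $\min(n,r)$ crowns the prisoner who completes it as leader. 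Coprimality guarantees that leader election cannot stall in a symmetric deadlock: the total number of tokens is $n$, which cannot be divided evenly among $r$ rooms. Once there is a unique leader (or a unique distinguished room), we run the known non-symmetric protocol, with the remaining state values encoding the phase. This yields the bound $q\ge f(\min(n,r))$, provided the counters stored in rooms are taken modulo something depending only on $\min(n,r)$.

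The hard step is making the election robust to adversarial interleaving without knowing which room is which. I would try to encode ``number of tokens mod $\min(n,r)$'' in each room and argue via Bezout that some prisoner eventually observes an unbalanced state it can act on uniquely.
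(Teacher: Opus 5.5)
Your necessity argument is the paper's construction. Your $\sigma$-orbits of size $d$ are its blocks of $d$ prisoners and $d$ rooms, and your per-round shift plays the role of its permutation $\sigma_\ell$. The step you call the main obstacle needs no quotient system or lifting lemma. Your own induction already shows that, after each round, the room states and every prisoner's view are the same whatever shifts were used. So suppose a prisoner declares at time $T$ in some valid schedule. Rerun that schedule with shift $0$ in every round before $T$ and cycle through all shifts afterwards. The views up to $T$ are unchanged and the schedule is still valid, yet no prisoner of class $1$ has entered a room of class $2$, so the declaration is wrong.

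The genuine gap is sufficiency: you give a heuristic rather than a protocol, and the heuristic uses the hypothesis in the wrong way.

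\emph{The stated use of coprimality is too weak.} ``$n$ tokens cannot be divided evenly among $r$ rooms'' only uses $r\nmid n$. That also holds for $(n,r)=(6,4)$, where your own necessity argument rules out every symmetric strategy. There the paired schedule leaves the first-visit marks as $(2,2,1,1)$: unbalanced, but symmetric.

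\emph{Threshold crowning is not unique.} In that schedule any ``room reaching a threshold'' event fires in two paired rooms, crowning two leaders.

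\emph{What is missing} is a global counting constraint that forces \emph{exactly one} winner and uses $\hcf(n,r)=1$ in full, i.e.\ that for every $i\ge 2$, either $i\nmid r$ or $i\nmid n$.

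For $r>n$ the paper does this with a race. Each prisoner marks one room via a bit $S$. A prisoner at level $i$ must raise an $n$-state counter $T$ above $n-i$ in $\lceil r/i\rceil$ rooms if $i\nmid r$, or in $\lceil n/i\rceil$ marked rooms if $i\mid r$. All $i$ contenders cannot succeed, since $i\lceil r/i\rceil>r$ or $i\lceil n/i\rceil>n$, but at least one must. So the number of contenders drops while staying positive, leaving a unique leader using $2n$ states.

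Nor is the case $n>r$ ``symmetric in spirit''. The paper writes $kn=mr-1$ (Bezout). Everyone raises a bit $T$ from $0$ to $1$ on $k$ occasions. The $r$ prisoners who flipped an $S$-bit each try to lower a $T$-bit $m$ times, so exactly $r-1$ of them succeed. The race is then run among these $r-1<r$ candidates, using $O(r)$ states.

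A unique leader plus four further states then wins via the Kane--Kominers strategy (Lemma~\ref{leader-wins}). Your plan has this last step, but your mod-$\min(n,r)$ token idea supplies neither uniqueness argument.
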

	
	\section{Monotonicity}\label{sec:mono}
	A difficulty with these problems is the lack of straightforward arguments for monotonicity. Of course, there is monotonicity in $q$, since if the prisoners can win for a particular $q$ they can also win for $q'>q$ simply by never using the last $q'-q$ states (if some rooms are initially in these states, then the prisoners may adopt a winning strategy for the initial state where these are replaced by state $0$, with the addendum that any prisoner who sees an extraneous state changes it to state $0$ forthwith). One might imagine (at least for non-symmetric strategies) that having more prisoners can only make things harder, and in particular that if $n$ prisoners $P_1,\ldots, P_n$ have a winning strategy (for particular values of $r$ and $q$) then we may obtain a winning strategy for $n-1$ prisoners $P'_1,\ldots,P'_{n-1}$ by $P'_i$ following the strategy of $P_i$ for each $i<n-1$, and $P'_{n-1}$ choosing at each turn whether to act as $P_{n-1}$ or $P_{n}$. Unfortunately, this argument breaks down since this strategy is only guaranteed to win if $P'_{n-1}$ enters each room infinitely often as $P_{n-1}$ and infinitely often as $P_n$, which we cannot ensure. (Recall that the rooms are indistinguishable, so $P'_{n-1}$ cannot adopt a simple strategy of, say, for each room acting alternately as $P_{n-1}$ and as $P_n$.) However, a variation of this argument can be used to show monotonicity.
	
	\begin{lemma}\label{monotonic}Suppose that $n$ prisoners do not have a winning strategy for $r$ rooms with $q$ states. Then $n+1$ prisoners do not have a winning strategy for $r$ rooms with $q$ states.
	\end{lemma}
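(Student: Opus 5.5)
We prove the contrapositive. Assume $n+1$ prisoners $P_1,\dots,P_{n+1}$ have a winning strategy $\mathcal S$ for $r$ rooms with $q$ states. We build a winning strategy for $n$ prisoners $P'_1,\dots,P'_n$. For $i<n$, $P'_i$ plays exactly as $P_i$ in $\mathcal S$, and declares whenever $P_i$ would. $P'_n$ keeps two simulated memories, one for $P_n$ and one for $P_{n+1}$. On each visit it chooses one of these identities by a rule $\rho$, acts as that prisoner would, and updates only that memory. It declares if the simulated prisoner declares.

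\textbf{Step 1 (soundness is automatic).} Every finite $n$-prisoner history maps to a finite $(n+1)$-prisoner history with the same sequence of room states. A finite history is always a prefix of some fair schedule. So a declaration by a simulated prisoner in $\mathcal S$ implies that, in the simulated history, $P_1,\dots,P_{n+1}$ have all visited all rooms. Then $P'_1,\dots,P'_{n-1}$ have visited all rooms, and $P'_n$ has visited every room, in at least one of its two roles. Hence any declaration made in the $n$-prisoner game is correct.

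\textbf{Step 2 (termination).} This is the real content, and the obstacle the paper flags. We must choose $\rho$ so that every fair $n$-prisoner schedule eventually produces a declaration. It is enough that, whenever no declaration ever occurs, the simulated $(n+1)$-schedule is fair, because then $\mathcal S$ forces a declaration. Any rule depending only on $P'_n$'s visit count fails. The adversary knows $\rho$, so it can steer $P'_n$ into a fixed room only on the visits where $\rho$ chooses $P_n$.

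I would therefore make $\rho$ depend on what $P'_n$ observes: keep a separate alternation toggle for each pair (observed state, current simulated memory configuration). To defeat this, the adversary must keep one room, say $A$, visited by $P_{n+1}$ only finitely often. I would then show that such a schedule would let us build an unfair $(n+1)$-schedule along which $\mathcal S$ still declares. That declaration would be unsound, since $P_{n+1}$'s visits to $A$ stop after some finite time, and this contradicts Step 1.

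To carry out this contradiction, fix a time $T$ after which $P_{n+1}$ never enters $A$. Using the indistinguishability of rooms, relabel later visits so that the observations along the schedule are unchanged. Then insert extra visits of $P_{n+1}$ into $A$, placed so that they leave every prisoner's view unchanged, which makes the schedule fair. Apply the winning property of $\mathcal S$ to this fair schedule. Finally, project back to argue that the declaration already occurs within the original history.

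\textbf{Step 3 (fallback).} If the state-dependent toggle cannot be made to work, I would instead use a limit argument. Suppose that for every rule $\rho$ the adversary has a fair $n$-schedule with no declaration. Then consider the rules that change identity only after seeing a configuration for the $k$-th time, for larger and larger $k$. A König-type compactness argument over finite prefixes should give one $(n+1)$-schedule that is fair and never declares, contradicting that $\mathcal S$ wins.

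I expect the fairness issue in Step 2 to be the main obstacle, namely guaranteeing that both $P_n$ and $P_{n+1}$ visit every room infinitely often. Steps 1 and 3 are bookkeeping once the right $\rho$ is fixed.
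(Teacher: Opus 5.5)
Your Step 1 (soundness of the simulation) is fine. The gap is Step 2, which you rightly call the entire content of the lemma: it is never carried out, and neither the sketch given for it nor the Step 3 fallback would work.

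\begin{itemize}
\item Your proposed contradiction, ``that declaration would be unsound, since $P_{n+1}$'s visits to $A$ stop after some finite time'', is not a contradiction. A declaration is correct as soon as every prisoner has visited every room \emph{once}. So a declaration on a schedule where $P_{n+1}$ stops entering $A$ after time $T$ can be perfectly sound.
\item Your repair, inserting extra visits of $P_{n+1}$ into $A$ ``placed so that they leave every prisoner's view unchanged'', is impossible as stated. Each such visit lengthens $P_{n+1}$'s own sequence of observations. An arbitrary deterministic strategy may respond to that differently later, which changes room states and hence everyone else's views. Relabelling rooms does nothing about this.
\item The König-type fallback fails structurally: fairness is not preserved under limits of finite prefixes, so compactness cannot produce a \emph{fair} non-declaring schedule. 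With your family of rules (switch identity only after the $k$-th sighting, $k\to\infty$), the limiting $P'_n$ never switches at all. The limit schedule is then as unfair as possible.
\end{itemize}

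The missing idea is that $P'_n$ should not choose between the two identities at all. On each visit to a room in state $s$, $P'_n$ acts as $P_n$ would on seeing $s$, producing $s'$. It then immediately acts as $P_{n+1}$ would on seeing $s'$, updating both simulated memories, and declares if either simulated prisoner declares.

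Now any fair $n$-prisoner schedule lifts to an $(n+1)$-prisoner schedule: replace each visit of $P'_n$ by consecutive visits of $P_n$ and then $P_{n+1}$ to the same room. The room states agree, and each $P_i$ with $i<n$ sees exactly what $P'_i$ sees. The observations of both $P_n$ and $P_{n+1}$ are computable by $P'_n$.

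Fairness of the lifted schedule is automatic, because every visit of $P'_n$ to a room is a visit of both $P_n$ and $P_{n+1}$ to it. Hence $\mathcal S$ declares at some finite time, and the corresponding $n$-prisoner declares at the corresponding moment. Your Step 1 reasoning then shows this declaration is correct. This is exactly the paper's proof, phrased there from the warden's side: the warden's counter-schedule against the composed $n$-prisoner strategy is lifted in this way.
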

	\begin{proof}Clearly we have $n>1$. Fix an arbitrary strategy for each of $n+1$ prisoners $P_1,\ldots, P_{n+1}$. We will show that the warden has a strategy that defeats it. To this end, consider $n$ prisoners $P'_1,\ldots, P'_n$ with the following strategy. For each $i<n$, $P'_i$ behaves exactly as $P_i$ does in the original strategy. Prisoner $P'_n$, on entering a room in state $s$, first behaves as prisoner $P_n$ would on entering a room in state $s$, updating the state to $s'$, then behaves as prisoner $P_{n+1}$ would on entering a room in state $s'$.
		
		By assumption, the warden has a counter-strategy against prisoners $P'_1,\ldots,P'_n$, that is, a schedule for leading the prisoners into rooms such that none of them can ever deduce that all prisoners have visited all rooms, and yet each prisoner eventually visits every room arbitrarily many times. The warden may adapt this schedule to prisoners $P_1,\ldots, P_n$ by replacing each visit of $P'_i$ for $i<n$ with a visit of $P_i$ to the same room, and replacing each visit of $P'_n$ by consecutive visits of $P_n$ and $P_{n+1}$ to the same room. This ensures that each of $P_1,\ldots, P_{n+1}$ eventually visits every room arbitrarily many times. Note that if all prisoners have visited all rooms by a certain point in this schedule, then the same is true for the corresponding point in the original schedule. Since, for each $i<n$, $P_i$ has the same information that $P'_i$ would have in the original schedule, none of $P_1,\ldots,P_{n-1}$ can ever deduce that all prisoners have visited all rooms. Similarly, since $P_n$ and $P_{n+1}$ each have no more information about $P_1,\ldots, P_{n-1}$ than $P'_n$ would have had about $P'_1,\ldots,P'_{n-1}$, neither of them can ever deduce that all of $P_1,\ldots,P_{n-1}$ have visited all rooms. Thus the strategy we chose is not a winning strategy for the prisoners; since it was arbitrary, they do not have a winning strategy.
	\end{proof}
	
	The final parameter is the number of rooms, $r$. There is no apparent reason why we should expect monotonicity in $r$. Indeed, when $n=5$ and $q=3$ we know that the prisoners win if $r<3$ or if $r>3$, but the problem is open for $r=3$.
	
	We recall that if the prisoners are required to use a symmetric strategy, there is no monotonicity in either $n$ or $r$: see Theorem \ref{thm:coprime}.
	
	\section{Two rooms with two states}\label{sec:22}
	In this section we prove the following main result, which answers Question \ref{qu:22}.
	
	\begin{thmone}
		The prisoners do not have a winning strategy for $r=q=2$ (for any $n\geq 2$).
	\end{thmone}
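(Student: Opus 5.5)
By Lemma~\ref{monotonic}, it suffices to show that $n=2$ prisoners have no winning strategy when $r=q=2$. Two prisoners that cannot win can be padded to $n$ prisoners that cannot win, by induction. So we fix two prisoners $A$ and $B$ with arbitrary deterministic strategies and two rooms, both initially in state $0$. We then construct a warden schedule in which each prisoner visits each room infinitely often, yet at every finite time each prisoner's observation history is consistent with some alternative schedule in which the other prisoner has not visited some room. A prisoner's information is just the sequence of states it has seen together with its own actions. The rooms are indistinguishable to it. So the warden's freedom lies in choosing which room a prisoner enters, and in interleaving the other prisoner's visits.

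The first step is a reduction that models each prisoner as a finite-memory transducer. Its strategy is a function from its history to an action (leave or flip), and we study the sequence of states it sees. The warden tries to keep each prisoner's view ``fooled'': for each finite prefix of $A$'s observations, we exhibit a consistent schedule in which $B$ visited only one room. Such a schedule can be obtained, for instance, by rerouting all of $B$'s visits to one room, or by replacing $B$'s visits with an idle period, whenever the states $A$ sees can be explained that way. The key structural fact to exploit is that with only two states, a single visit by $B$ changes a room's state at most by one flip. A room that $A$ sees in state $s$ twice could therefore equally have been untouched, or flipped twice, in between.

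The main step, and the expected obstacle, is a case analysis on how each prisoner reacts to seeing $0$ versus $1$. If a prisoner from some point on never flips in some state, it acts as a passive observer there. If it always toggles, its effect is predictable. The warden's strategy is to pair visits: whenever $B$ is to enter room $R_1$ in order to ``count'' it, the warden arranges for $B$ also to enter $R_2$ in a way that returns the global multiset of room states to one already reachable by a schedule in which $B$ never visited $R_2$. Concretely, I would define a finite set of ``configurations'', consisting of the two room states and each prisoner's relevant memory class. Then I would show that the reachable configuration graph under ``honest'' schedules, in which both visit both rooms, embeds into the graph under ``cheating'' schedules, in which one prisoner avoids a room, in a way that preserves each prisoner's local view. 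Since the prisoners have unbounded memory, the embedding must be built inductively: at each stage the warden extends both schedules by a block that makes progress (every prisoner visits every room once) in the honest schedule, while preserving the views in the cheating one. The difficult case is when both prisoners flip in both states, so that every visit is informative. There I would try the swap trick: send $A$ and $B$ consecutively into the same room, so that their two flips cancel, or into different rooms, so that the rooms' labels are swapped. Either way, the resulting configuration is indistinguishable from one in which one of them visited the other room.

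Finally, we check that the resulting infinite schedule is fair, with each prisoner visiting each room infinitely often, and that at every finite time, for each prisoner, a consistent cheating schedule exists. Hence neither prisoner can ever declare with certainty. This defeats the arbitrary strategy, and Lemma~\ref{monotonic} then extends the result to all $n\ge 2$.
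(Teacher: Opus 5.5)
Your reduction to two prisoners via Lemma~\ref{monotonic} is correct and is how the paper starts. After that, the proposal contains no argument. The claim that the configuration graph of ``honest'' schedules embeds into that of ``cheating'' schedules while preserving each prisoner's view is just a restatement of the theorem, and the framework you set up cannot prove it. A prisoner's strategy is an arbitrary function of its entire, unbounded observation history. So there is no finite set of ``memory classes'' and no finite configuration graph; you model prisoners as finite-memory transducers, but, as you later acknowledge, they have unbounded memory. For the same reason your case split is not well-defined. Whether a prisoner flips on seeing state $0$ depends on its history, and that history is determined by the warden's own schedule. So ``from some point on never flips in some state'' versus ``always toggles'' does not classify strategies, and nothing in your plan handles a strategy that flips in a given state on some histories but not others. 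Your one concrete device, the swap trick, is not valid as stated either. From rooms in equal states $(s,s)$, after $A$ toggles room $1$, $B$ sees $1-s$ if sent into room $1$ but $s$ if sent into room $2$, so $B$ can distinguish these alternatives.

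The missing idea is a way to turn a history-dependent strategy into a fixed object the warden can exploit. The paper considers the canonical run in which a prisoner is always led into a room in the state in which it left its previous room (equivalently, repeatedly into one room). This gives a fixed $0/1$ sequence that is either eventually constant (the prisoner is ``stuck'') or takes both values infinitely often. Suppose both prisoners' sequences take a common value $x$ infinitely often. Then the warden runs rounds: Alice is led into room $a_i$ until she leaves it in state $x$, then Bob into room $3-a_i$ until he does the same. Each prisoner sees exactly its canonical sequence, so the room choices are invisible; this already covers your ``both always toggle'' case. Otherwise one prisoner is stuck at $0$ and the other at $1$, and much more is needed:
\begin{itemize}
\item the number of ``resets'' $k$ (for Alice) and $\ell$ (for Bob) after which the prisoner no longer gets stuck again;
\item a case analysis on $k$ versus $\ell$;
\item in the case $k=\ell<\infty$, a further dichotomy. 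Either some room sequence for one prisoner, using both rooms infinitely often, makes both rooms simultaneously equal infinitely often; then the warden interleaves the other prisoner at those moments. Or every such sequence ends with that prisoner inactive and the rooms in opposite states; then the warden uses the relabelling $b'_i=3-b_i$ according to whether the other prisoner ``flips'', with $t_A,t_B$ unbounded.
\end{itemize}
Without some invariant of this kind, the inductive step you describe cannot be carried out.
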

	\begin{proof}We show this for two prisoners, Alice and Bob. The result then follows by induction and Lemma \ref{monotonic}.
		
		Assume that a winning strategy does exist. Now consider what happens if Alice, following her strategy, is repeatedly led into a single room initially in state $0$. This defines an infinite sequence of states, which either is eventually constant or takes both values infinitely often. If it becomes constant with value $x$, we sat that Alice is \textit{stuck} in state $x$ from that point, \ie she will not change the state (or declare) until she sees a different state. Similarly, we can define an infinite sequence for Bob with the same possibilities.
		
		Suppose that for some state $x$, both sequences take value $x$ infinitely often. In this case the warden can construct a sequence as follows. Choose $a_1,a_2\ldots$ to be a $1$-$2$ sequence taking each value infinitely often. Now we construct a schedule in rounds. In round $i$, repeatedly lead Alice into room $a_i$ until the next occasion that it is in state $x$, then lead Bob into room $3-a_i$ until the next occasion it is in state $x$. Note that whenever a prisoner enters a room, its state matches that in which they left the previous room. Consequently neither prisoner can distinguish any schedule of this form from any other, and in particular they cannot tell whether $a_1,\ldots, a_k$ includes both values for any $k$, so cannot declare.
		
		Consequently one prisoner (without loss of generality Alice) gets stuck in state $0$ and the other in state $1$. Consider now the following hypothetical procedure. (Note that this procedure as it stands is not something that the warden would be allowed to carry out as part of a strategy, but instead it will be used in the construction of a valid strategy for the warden.) Lead Alice into a room until she gets stuck in $0$, then reset the room state to $1$ and lead Alice in until she gets stuck in $0$ (if ever), then repeat. Either for some $k\geq 1$, after resetting the state to $1$ on $k\geq 1$ occasions Alice will return the room to state $1$ infinitely often, or after every reset she will eventually get stuck in $0$; we take $k=\infty$ in the latter case.
		
		Similarly, we can do the same with Bob, instead resetting the room to $0$; either there is some $\ell\geq 1$ such that after $\ell$ resets Bob will return the room to state $0$ infinitely often, or after every reset he will eventually get stuck in $1$, and we take $\ell=\infty$ in the latter case.
		
		We distinguish four cases depending on the values of $k$ and $\ell$. In each case we choose a sequence $a_1,a_2\ldots$ as before.
		
		\begin{case}$k=\ell=\infty$.\end{case}
		In round $1$, repeatedly lead Alice into room $a_1$ until she becomes stuck in state $0$. For each $i>1$, in round $i$, lead Bob into room $a_i$ until he becomes stuck in state $1$, then lead Alice into room $a_i$ until she becomes stuck in state $0$. Note that neither prisoner can tell whether $a_1,\ldots, a_k$ includes both values for any $k$, so they can never declare. 
		
		\begin{case}$k>\ell$.\end{case}
		In round $1$, lead Alice into room $a_1$ until she is stuck in state $0$. For $1<i\leq\ell$, in round $i$, lead Bob into room $a_i$ until he becomes stuck in state $1$, then Alice into room $a_i$ until she becomes stuck in state $0$. In round $\ell+1$, lead Alice into room $a_{\ell+1}$ until she becomes stuck in state $0$ and then lead Bob into room $a_{\ell+1}$ until he next leaves it in state $0$. In round $i$ for $i>\ell+1$, lead Alice into room $a_i$ once (since she is stuck at $0$, she will not change the state), and then lead Bob into room $a_i$ until he next leaves it in state $0$. Again, neither prisoner can tell whether $a_1,\ldots, a_k$ includes both values for any $k$, so they can never declare. 
		
		\begin{case}$k<\ell$.\end{case}
		In round $1$, lead Alice into room $a_1$ until she is stuck in state $0$. For $1<i\leq k$, in round $i$, lead Bob into room $a_i$ until he becomes stuck in state $1$, then Alice into room $a_i$ until she becomes stuck in state $0$. In round $k+1$, lead Bob into either room until he is stuck in state $1$, then lead him into the other room until he is stuck in state $1$; since Bob did nothing in round $1$, and $\ell\geq k+1$, this is possible. After this round, both rooms are in state $1$ and Bob has entered both rooms but Alice may not have done. For $i>k+1$, in round $i$ lead Alice into room $a_i$ until she next leaves it in state $1$, then lead Bob into room $a_i$ once (since he is stuck in state $1$, he will not change its state). At no point will either prisoner be able to determine whether Alice has visited both rooms.
		
		\begin{case}$k=\ell<\infty$.\end{case}
		In round $1$, lead Alice into room $1$ until she is stuck in state $0$. For $1<i\leq k$, in round $i$, lead Bob into room $1$ until he becomes stuck in state $1$, then Alice into room $1$ until she becomes stuck in state $0$. In round $k+1$, lead Bob into room $1$ until he becomes stuck in state $1$. After $k+1$ rounds, the rooms are in different states, and neither prisoner has visited room $2$. 
		Note that if we lead either prisoner into a room in a given state $s$, and thereafter ensure that each room they visit is in the state in which they left the previous room, they will return the room to state $s$ infinitely often. For Alice with $s=0$, or Bob with $s=1$, this is true because they are stuck in $s$, and otherwise it is true because $\ell=k$, so if led repeatedly into a room initially in state $s$, they will not get stuck in $1-s$ again.
		
		First, suppose that from this point on there is a choice of prisoner $P$, a state $s$, and a sequence of rooms $a_1,a_2,\ldots$, taking both values infinitely often, such that leading prisoner $P$ into rooms $a_1,a_2,\ldots$ in sequence will result in both rooms being simultaneously in state $s$ infinitely many times. Then the warden may follow this sequence for $P$, but interleaved at points where both rooms are in state $s$ by leading the other prisoner $P'$ repeatedly into one room until they return it to state $s$. Since either room can be chosen each time this is done, with no way for either prisoner to tell which room was chosen, this can be done in such a way as to create a valid sequence, and neither prisoner can ever be sure that $P'$ has visited room $2$.
		
		Secondly, suppose that this is not the case. Then leading either prisoner into rooms in any sequence that uses both rooms infinitely often will ensure that this prisoner eventually takes no further action, and the rooms are in opposite states at this point. Choose such sequences $a_1,a_2,\ldots$ and $b_1,b_2,\ldots$. Let $t_A$ be chosen such that if Alice is led into rooms following sequence $a_1,a_2,\ldots$ then her last action will be earlier than $t_A$. Similarly choose $t_B$ for Bob and sequence $b_1,b_2,\ldots$; note that by definition we can choose $t_A,t_B$ to be arbitrarily large. We say that a prisoner ``flips'' if after following their sequence until their last action, room $1$ is in state $0$ (that is, this process has exchanged the states of the rooms). Define $b'_i$ to be $3-b_i$ if Alice flips, and $b_i$ otherwise, and similarly for $a'_i$. Now the warden will adopt either the following strategy, or the equivalent strategy with the roles of Alice and Bob reversed. First lead Alice into rooms $a_1,\ldots,a_{t_A}$, then lead Bob into rooms $b'_1,\ldots,b'_{t_B}$. Subsequently, for each $i\geq 1$ lead Alice into room $a'_{t_A+i}$ followed by Bob into $b'_{t_B+i}$. Note that neither prisoner knows whether Alice or Bob goes first, and since $t_A$ or $t_B$ are unbounded neither prisoner can ever tell that the other has entered room $2$.
	\end{proof}
	
	\section{Winning strategies for three states}\label{sec:three}
	In this section we show that if rooms have three states then four prisoners have a winning strategy for any number of rooms, and five prisoners have a winning strategy unless there are exactly three rooms. We refer to the prisoners as Alice, Bob, Charles, Deborah and Eve.
	
	We describe prisoners' strategies mostly as in \cite{KK21}, but will need an additional control-flow structure. We begin with a brief self-contained outline of the notation, including our new instructions.
	\begin{itemize}
		\item \flip{$(a,b)$} is an instruction to wait until you enter a room in state $a$, then change it to $b$; we write \flip{$(\bullet,b)$} if the prior state of the room can be anything other than $b$.
		\item \see{$(a)$} is an instruction to wait until you enter a room in state $a$, but not change it; it guarantees that no further instructions will be executed until that state has been seen. 
		\item \rpttext{$(k)$} is an instruction to carry out the following indented block $k$ times. 
		\item \dec is an instruction to claim that all prisoners have entered all rooms. 
		\item We add another structure, \wotext{$\cdots$}\eltext. \wotext{$(a)$} is an instruction to carry out the following indented block until either it is completed or you enter a room in state $a$; in the latter case, immediately skip ahead to the following \eltext block. Note that this check is carried out every time you enter a room, which may occur multiple times while processing a single instruction, but is not carried out when leaving a room. The \eltext block is not executed if the \wotext block is completed successfully. 
	\end{itemize}
	We give a simple example to illustrate the new structure.
	
	\NoCaptionOfAlgo
	\DontPrintSemicolon
	\begin{algorithm}[H]
		\caption{Example}
		\eIf{$(0)$}{\flip{$(1,0)$}}{\flip{$(1,2)$}}
	\end{algorithm}
	This fragment requires the prisoner to change the next room encountered in state $1$ to either state $0$ or state $2$ depending on whether a room in state $0$ was seen in the interim. In either case only one flip is carried out.
	
	\begin{theorem}\label{three-four}The prisoners have a winning strategy whenever $n\leq 4$ and $q\geq 3$.
	\end{theorem}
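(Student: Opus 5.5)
By monotonicity in $q$ (Section~\ref{sec:mono}) it suffices to take $q=3$. By Lemma~\ref{monotonic} in contrapositive form, a winning strategy for $n$ prisoners gives one for any smaller number of prisoners $n'\ge 2$. So it suffices to exhibit, for every $r\ge 2$, a winning strategy for exactly four prisoners with three states and all rooms initially in state $0$. The case $r=2$ is already covered by \cite{KK21}, but a uniform construction would be preferable. I will write the strategy in the \flip/\see/\rpttext/\wotext\ notation above, with Alice as the designated counter who eventually declares.

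\textbf{Protocol.} Use state $0$ as ``neutral''. Each of Bob, Charles and Deborah ``deposits'' a token by a \flip{$(0,1)$}, performing a fixed number of deposits: at least $r$, so that the adversary cannot prevent some deposit in every room, and enough to cover every room. State $2$ serves as a secondary marker. Its first use is to let Alice confirm that she herself has visited every room: she converts tokens to $2$ and later sees them come back. Its second use is to let the helpers certify that they have visited all rooms before depositing their last tokens. The \wotext{$(2)$}\eltext\ construct lets a helper switch phase when it sees a $2$, meaning ``Alice has started a new round''. Deposits made before the phase change are then counted differently from deposits made after it. This solves the basic difficulty that, because rooms are indistinguishable, a helper cannot choose to visit a new room.

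\textbf{Counting and correctness.} I will set the number of tokens each helper deposits to a value $m$ such that Alice's counter reaching $3m$ (collected via \flip{$(1,0)$}) can only occur once every helper has completed the phase in which it must have seen every room. The helpers' pass counts are then chosen so that each helper's final deposits occur after it has witnessed states forcing a visit to each of the $r$ rooms. Correctness is checked by an invariant. First, the number of rooms in state $1$ plus Alice's count plus the deposits outstanding in each helper's ledger equals a fixed total. Second, every declared completion is preceded by the required witnessed events. Liveness is argued from the fairness condition: every prisoner visits every room infinitely often, so no one waits forever on a state that is present somewhere.

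\textbf{Main obstacle.} The hard step is liveness and invariant-keeping when several helpers interleave their use of the single extra state $2$ with Alice's. With only three states, a $1$ left by Bob is indistinguishable from one left by Charles, so the adversary can starve one helper while another keeps depositing. My first attempt is to give the helpers staggered roles, which asymmetry permits. Deborah only ever acts in response to $2$s, Charles only via \flip{$(2,1)$}-type transfers, and Bob only via \flip{$(0,1)$}. Each helper's contribution then has a distinct signature that Alice can count separately. If this fails for $n=4$, I will fall back to the case analysis on how many rooms can be simultaneously nonzero, bounded by the helpers' pending tokens, to show that deadlock cannot occur.
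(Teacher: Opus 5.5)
Your proposal does not yet contain a strategy. The deposit count $m$, the helpers' pass counts and the actual use of state $2$ are all left open, and your last paragraph concedes that the design may fail and names a fallback. More seriously, the one mechanism you do commit to does not do what you claim. You have each helper make at least $r$ deposits by \flip{$(0,1)$} ``so that the adversary cannot prevent some deposit in every room''. But Alice's collecting move \flip{$(1,0)$} returns a room to $0$. So the warden can alternate Bob and Alice in room $1$ until Bob has made all his deposits there, without Bob ever entering another room. Alice's tally reaching $3m$, and any conservation invariant linking tokens to her count, records how many deposits occurred but never where. It therefore cannot certify that anybody has entered every room. That is exactly the difficulty of the multi-room problem. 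Your sentence delegating it to state $2$ is where the proof would have to be, and nothing in the proposal supplies it.

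Your instinct to stagger roles points the right way, but it has to be pushed to full sequentiality, which is the idea the paper uses. Suppose a prisoner is the only one acting and flips rooms out of some state into a state that nobody reverts in the meantime. Then each flip must be in a fresh room. So $r$ such flips, or $r-1$ of them together with the unique room in a distinguished state that the prisoner was required to see beforehand, certify that this prisoner has entered every room. The paper does this for $r\geq 3$, taking $r=2$ from \cite{KK21}:
Alice does \flip{$(0,1)$} $r-1$ times and then \flip{$(0,2)$}. Bob does \see{$(2)$}, then \flip{$(1,2)$} $r-2$ times and \flip{$(1,0)$}. Charles does \see{$(2)$} and \see{$(0)$}, then \flip{$(2,0)$} $r-1$ times, \flip{$(0,2)$} and \flip{$(0,1)$}. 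Deborah does \see{$(2)$}, \see{$(0)$} and \see{$(1)$}, then \flip{$(\bullet,1)$} $r-1$ times, and then declares.
State $2$ first appears when Alice finishes, leaving $1,\dots,1,2$. State $0$ next appears only when Bob finishes, leaving $0,2,\dots,2$. State $1$ next appears only at Charles's last flip, leaving $0,\dots,0,1,2$. So the waiting instructions lock each prisoner out until the previous one is done, and no counting is needed.

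Finally, your reduction to all rooms starting in state $0$ is not free at $q=3$, since you cannot waste states as in Section~\ref{sec:symm}. The paper gets arbitrary known initial states by observing that its strategy satisfies the hypotheses of \cite[Lemma 4]{KK21} with Alice as leader. Your reductions in $q$ and in $n$ (via Lemma~\ref{monotonic}) are fine.
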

	\begin{proof}
		We may assume that $r\geq 3$, since the two-room case is covered in \cite{KK21} (and the one-room case is a special case of the folklore result that the prisoners win for any $n,q\geq 2$). We start by giving a winning strategy for four prisoners, since, by Lemma \ref{monotonic}, this will prove the full result. We will show that the following strategy works on the assumption that all rooms are initially in state $0$. 
		
		\medskip
		\begin{algorithm}[H]
			\caption{Alice's Protocol}
			\rpt{$(r-1)$}{\flip$(0,1)$}
			\flip{$(0,2)$}\;
		\end{algorithm}
		
		\begin{algorithm}[H]
			\caption{Bob's Protocol}
			\see{$(2)$}\;
			\rpt{$(r-2)$}{\flip{$(1,2)$}}
			\flip{$(1,0)$}
		\end{algorithm}
		
		\begin{algorithm}[H]
			\caption{Charles' Protocol}
			\see{$(2)$}\;
			\see{$(0)$}\;
			\rpt{$(r-1)$}{\flip{$(2,0)$}}
			\flip{$(0,2)$}\;
			\flip{$(0,1)$}
		\end{algorithm}
		
		\begin{algorithm}[H]
			\caption{Deborah's Protocol}
			\see{$(2)$}\;
			\see{$(0)$}\;
			\see{$(1)$}\;
			\rpt{$(r-1)$}{\flip{$(\bullet,1)$}}
			\dec
		\end{algorithm}
		
		Note that the assumptions of \cite[Lemma 4]{KK21} are satisfied with Alice being the leader, and so the prisoners can adapt this strategy to win from any given initial state. We now verify correctness of this strategy.
		
		Since Bob, Charles and Deborah all start with an instruction to see state $2$, only Alice will be able to act initially. Since all rooms start in state $0$, and Alice eventually visits all rooms, she will be able to flip $0$ to $1$ the $r-1$ times required to complete the \rpttext block, and finally flip $0$ to $2$. No other prisoner will act during this time, since there are no rooms in state $2$ until Alice has completed her instructions. Once she has, this signals to Alice that she has visited all rooms and should not act again.
		
		Now the rooms are in states $1,\ldots,1,2$. In particular, all rooms are in state $1$ or $2$.   Whether or not Charles and Deborah fulfil line 1 of their protocols at this time, they are unable to fulfil line 2 until some room returns to state $0$, which will not occur until Bob finishes his actions. So only Bob is able to act; he will eventually see a room in state $2$, and visit all other rooms, flipping $r-2$ of them to state $2$ and the last to state $0$. Once a room returns to state $0$, Bob must have visited all rooms. Now the rooms are in states $0,2,\ldots,2$, and Bob will not act again.
		
		Since there are now rooms in states $0$ and $2$ but not $1$, Deborah may reach line 3 of her protocol, but cannot execute it, so will not change the state of any room until a room returns to state $1$. Whether Charles is currently on line 1 or line 2 of his protocol, he will eventually see the desired state(s) and move to line 3. From here, since $r-1$ rooms are in state $2$, he will eventually flip all of them to $0$, completing the \rpttext block. By this point, he must have visited every room. He will then flip two rooms to states $2$ and $1$, in that order.
		
		At this point, there are rooms in all three states, with exactly one of them in state $1$ (in fact the exact list of states is $0,\ldots,0,1,2$). Alice, Bob and Charles have all visited every room, and will not act again. Whether Deborah is currently on line 1, 2 or 3 she will eventually see the desired states, and proceed to line 4. In particular, she has visited the room currently in state $1$. There are $r-1$ rooms not in state $1$, and she will eventually enter each of them, flipping it to state $1$, and complete the \rpttext block. Her next instruction is to declare; by this point every prisoner has visited every room, and so this declaration is correct.

		The fact that $n\leq 3$ prisoners can win follows from Lemma \ref{monotonic}, although we note that the strategy above straightforwardly adapts to Alice, Bob and Charlie (or Alice and Bob) using the same strategies with \dec appended to the last-named prisoner's protocol. A different winning strategy for $n=2$ was given in \cite{KK21}.
	\end{proof}
	
	\begin{theorem}\label{three-five}The prisoners have a winning strategy whenever $n=5$, $q\geq 3$ and $r\neq 3$.
	\end{theorem}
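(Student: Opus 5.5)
Two cases are needed: $r\le 2$ and $r\ge 4$. By monotonicity in $q$ (Section \ref{sec:mono}) it suffices to take $q=3$, and by \cite[Lemma 4]{KK21}, as in the proof of Theorem \ref{three-four}, it suffices to handle the initial configuration with every room in state $0$, provided Alice acts as a leader who moves first. For $r=1$ the folklore single-room result applies. For $r=2$ we would use the two-room, three-state strategy of \cite{KK21}, which works for every $n$. So the real content is $r\ge 4$.

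For $r\ge4$ we would extend the relay structure of the four-prisoner protocol. Each prisoner in turn waits for a distinctive configuration left by the previous one. That prisoner then sweeps all rooms and leaves a new configuration that is recognisable through a \see{} sequence, but which the earlier prisoners cannot mistake for their own trigger. In the four-prisoner strategy the final configurations were $1,\ldots,1,2$, then $0,2,\ldots,2$, then $0,\ldots,0,1,2$, each built from the three states in a small multiset pattern. A fifth prisoner needs one more pattern. With $r\ge4$ rooms there is room for a configuration containing two special rooms, for instance $0,\ldots,0,1,1,2$ or $0,0,2,\ldots,2$ (both needing $r\ge4$). Such a configuration can be detected through a count achievable with the \wotext{}\eltext{} construct. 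For example, a prisoner runs \rpttext{$(2)$}\flip{$(1,\cdot)$} inside a \wotext{} block, which branches on whether a certain state reappears.

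Concretely, we would keep Alice, Bob and Charles as in Theorem \ref{three-four}, but modify Charles's ending so that he produces a configuration with exactly two rooms in state $1$, which requires $r-2\ge2$ rooms of another state. Deborah waits for this configuration and sweeps it to a new pattern. Eve then uses \wotext{} to verify the final pattern and declares after sweeping $r-1$ rooms. The key steps are as follows. First, fix the sequence of five configurations. Second, write each protocol so that prisoner $i$ can act only once prisoner $i-1$ has finished. Third, check for each prisoner that completing the sweep certifies a visit to every room, by counting the flips $r-1$ (or $r-2$ plus a distinguished final flip). Fourth, check that no earlier prisoner is re-triggered, including a prisoner frozen midway through a \see{} chain.

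The main obstacle is the fourth step: avoiding spurious triggers. With only three states, prisoners waiting on \see{} sequences such as $(2),(0),(1)$ may have partially advanced their chain during earlier phases. The intermediate configurations during a sweep may then complete a waiting prisoner's trigger too early. We would address this by having Eve use a \wotext{} guard: whenever she sees a state that should be impossible after the full fourth phase, she resets to the start of her protocol. The condition $r\ne3$ should appear exactly where the counts of the special rooms in two consecutive configurations must differ. For $r=3$ the available patterns on three rooms collide, which is why that case is excluded.
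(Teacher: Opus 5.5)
Your preliminary reductions match the paper: monotonicity in $q$, starting from all rooms in state $0$ via \cite[Lemma 4]{KK21} with Alice as leader, and the cases $r\le 2$ from known results. But there is a genuine gap. For $r\ge4$, which is the entire content of the theorem, you do not give a strategy. You list properties the protocols should have and name the obstacle, but you write no protocol for Deborah or Eve and verify nothing.

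The obstacle is real. With three states, See-chains supply very few gates. If Charles is unchanged, the configuration $0,\ldots,0,1,2$ he leaves contains every state. The warden can then let a waiting fifth prisoner finish any See-chain and act before the fourth has done anything. If you instead alter Charles's ending, you must re-check Deborah's gate against his intermediate configurations, which you have not done.

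Your suggested remedies do not settle this. Detecting ``exactly two rooms in state $1$'' by Repeat$(2)$ Flip$(1,\cdot)$ consumes those rooms, and it is itself exposed to the prisoner you are trying to hold back. A reset for Eve is harmless only if she has not yet flipped anything, which you do not arrange. Your account of why $r\neq 3$ is excluded is also a guess: the paper does not show that the $r=3$ patterns collide, it leaves that case open.

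The missing idea is to drop a fixed order for the last two prisoners and use a token for mutual exclusion. Keep Alice, Bob and Charles exactly as in Theorem \ref{three-four}, and give Deborah and Eve the \emph{same} protocol:
See$(2)$, See$(0)$, Flip$(1,0)$;
then Without$(1)$ [See$(2)$; Repeat$(r-1)$ Flip$(0,2)$; Repeat$(r-3)$ Flip$(2,0)$] Else [Repeat$(r-1)$ Flip$(\bullet,1)$; Declare];
then Flip$(2,1)$ twice.

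Neither can flip until Charles's last move creates the unique room in state $1$. Whoever takes that room removes every $1$, which freezes the other. The winner then meets no $1$, so she completes her Without block. She visits every room through the $r-1$ flips $0\to 2$, leaves $r-3$ rooms in state $0$ and three in state $2$, and hands over by turning two of the $2$s into $1$s.

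The loser, after taking a $1$, has at most $r-2$ rooms in state $0$. So she cannot finish Repeat$(r-1)$ Flip$(0,2)$ and never reaches the step where she would remove a $2$. She must eventually enter a room in state $1$ and divert to the Else branch, where sweeping $r-1$ rooms to $1$ certifies she has visited every room. This holds even if she grabs the first hand-over $1$ before the winner's second Flip$(2,1)$.

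Here $r\ge 4$ is exactly what keeps $r-3\ge1$ rooms in state $0$, so that the loser can pass See$(0)$ after the winner has finished.
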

	\begin{proof}
		The case $r=2$ was proved in \cite{KK21} (and the one-room case is known), so we may assume $r\geq 4$. Again we give a winning strategy assuming all rooms are initially in state $0$; our strategy satisfies the assumptions of \cite[Lemma 4]{KK21} with Alice as leader, and so this implies the existence of a winning strategy for any given starting state.
		
		Alice, Bob and Charlie use the protocols from the proof of Theorem \ref{three-four}. Deborah and Eve both use the same protocol, as follows.
		
		\begin{algorithm}[H]
			\caption{Deborah and Eve's Protocol}
			\see{$(2)$}\;
			\see{$(0)$}\;
			\flip{$(1,0)$}\;
			\eIf{$(1)$}{\see{$(2)$}\;\rpt{$(r-1)$}{\flip{$(0,2)$}}\rpt{$(r-3)$}{\flip{$(2,0)$}}}{\rpt{$(r-1)$}{\flip{$(\bullet,1)$}}\dec}
			\flip{$(2,1)$}\;
			\flip{$(2,1)$}
		\end{algorithm}
		
		As in the proof of Theorem \ref{three-four}, neither Deborah nor Eve will be able to execute line 3 of their protocol (since it implicitly requires seeing $1$) until Alice, Bob and Charlie have all completed their protocols. At this point Alice, Bob and Charlie will each have entered every room, none of them will act again, and there will be $r-2$ rooms in state $0$ and one room in each of the other states. 
		
		Consequently, one of Deborah and Eve will eventually execute line 3 of their protocol, since there are rooms of all states available. Suppose, without loss of generality, that Deborah does this first. Immediately after this action, there are $r-1$ rooms in state $0$ and the remaining room is in state $2$; in particular, Eve will not be able to execute line 3 until Deborah flips a room to state $1$. Now Deborah will eventually execute the whole of the \wotext block without seeing a room in state $1$ in the interim. While doing this she will enter the room in state $2$ and subsequently flip all other rooms into state $2$, thus ensuring she has entered all rooms, then flip $r-3$ rooms to state $0$. When she completes the \wotext block, there are $r-3\geq 1$ rooms in state $0$ and three in state $2$. Since there are still no rooms in state $1$, Eve still cannot execute line 3. Since Deborah has completed the \wotext block, she has reached line 14. Eventually she will enter one of the rooms in state $2$, and execute line 14.
		
		At this point one of two things might happen next. Either Deborah will execute line 15, or Eve will execute line 3. We first assume the former. Now Deborah will not act again, and there are two rooms in state $1$, one in state $2$ and $r-3\geq 1$ in state $0$. Eve will therefore eventually execute lines 1 and 2 (if she hasn't already) and line 3. At this point there are $r-2$ rooms in state $0$, one room in state $1$ and one in state $2$.
		Thereafter, she will not be able to complete the \wotext block (since there are not enough rooms in state $0$) and must eventually enter the room in state $1$, causing her to execute the \eltext block. In order to complete the \eltext block she must enter all $r-1$ rooms not currently in state $1$, and flip them to state $1$; since she previously entered the room which was already in state $1$ this ensures she has entered all rooms by the time she declares.
		
		The other possibility is that Eve executes line 3 when Deborah has executed line 14, but not line 15. After Eve does this, there are $r-2$ rooms in state $0$ and two in state $2$. In this case Eve may execute line 5, but cannot execute line 7 more than $r-2$ times, and so she cannot proceed past line 7 of the \wotext block. Consequently, Eve will not be able to move a room out of state $2$ before Deborah executes line 15. Therefore Deborah will eventually enter one of the remaining rooms in state $2$ and flip it to state $1$. At this point there are $r-2-k$ rooms in state $0$, one in state $1$ and $k+1$ in state $2$, where $k\leq r-2$ is the number of times Eve has executed line 7. Since there are insufficient rooms in state $0$ for Eve to complete lines 6 and 7, she will eventually enter the room in state $1$ and proceed to the \eltext block. Subsequently she will enter all other rooms and declare, as before.
	\end{proof}
	
	\section{Symmetric strategies}\label{sec:symm}
	In this section, we consider for which values of $n$ and $r$ the prisoners have a symmetric winning strategy for some number of states. We may assume all rooms start in state $0$, since any known initial state can be reduced to this at the cost of wasting at most $r-1$ states. We prove the following result.
	\begin{thmtwo}
		If there are $n$ prisoners and $r$ rooms, the prisoners have a symmetric winning strategy for some $q$ if and only if $\hcf(n,r)=1$. Furthermore, there is some function $f$ such that for every pair $n,r$ with $\hcf(n,r)=1$, the prisoners have a winning strategy, starting from all rooms in state $0$, for any $q\geq f(\min(n,r))$.
	\end{thmtwo}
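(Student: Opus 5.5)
We prove the two directions separately: a symmetry-preserving adversary gives the impossibility, and an explicit protocol that breaks symmetry through coprimality gives the construction.

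\emph{Necessity.} Suppose $d=\hcf(n,r)>1$ and fix any symmetric strategy with any $q$. The warden partitions the prisoners into $d$ classes of size $n/d$ and the rooms into $d$ classes of size $r/d$, and pairs prisoner class $i$ with room class $i$. He runs the schedule in rounds. In each round he leads the $j$-th prisoner of every class into the $j$-th room of the corresponding class simultaneously, in lockstep, meaning consecutively in a fixed order. Since the strategy is symmetric and all rooms start in state $0$, induction on time shows that corresponding prisoners in different classes have identical histories. Likewise, corresponding rooms in different classes are in identical states. At some stage, though, a prisoner must enter a room outside its own class. The warden can arrange this by rotating the class pairing, that is, using a bijection between classes that is cyclically shifted, while keeping all $d$ copies synchronised.

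The key point is that any prisoner's view is reproduced exactly in a second schedule. In that schedule the warden first runs only the ``diagonal'' pairings for a long time and delays the rotations. Then some prisoner $P$ has the same observations up to any finite time in a run where some other prisoner has not yet visited some room. So $P$ can never declare correctly. To make this rigorous, we choose the schedule so that every prisoner's observation sequence is a function only of its position within its class. We then interleave a cyclic rotation of the pairing, as in the $a_i$-sequence arguments of Theorem~\ref{thm:22}, so that every finite prefix is consistent with some prisoner having missed some room. \textbf{Main obstacle:} making ``missed a room'' consistent with identical observations requires that room classes be indistinguishable to every prisoner at every time. We ensure this by always moving whole synchronised blocks, so each round acts as an automorphism of the configuration permuting the $d$ classes cyclically.

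\emph{Sufficiency.} Assume $\hcf(n,r)=1$ and let $m=\min(n,r)$. The aim is to elect a leader with bounded memory and then run the non-symmetric strategy for $q\ge 4$ from \cite{KK21} (with reset via \cite[Lemma 4]{KK21}). Symmetry breaking works like a Euclidean algorithm. Each prisoner repeatedly marks unmarked rooms in state $0$ as ``taken'' (one room per prisoner per phase). If $n<r$, then after each of the $n$ prisoners takes one room per phase, $r \bmod n$ rooms remain free in the last partial phase, which is nonzero by coprimality. Only prisoners who get a room in that final phase remain candidates. If $n>r$, the prisoners left without a room become the losing class and the others remain. Iterating reduces the pair $(n,r)$ to $(r \bmod n, \dots)$, exactly as in Euclid's algorithm, and ends at a single candidate when the remainder reaches $1$. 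The number of stages, and hence the number of states needed, depends only on the Euclidean sequence starting from $m$. This is why the bound is of the form $f(\min(n,r))$: counts larger than $m$ need not be stored, since one only tracks residues.

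The hard part is ensuring that candidates can detect when a phase ends without knowing $n$ or $r$ globally in the states. I would first try encoding the phase counter modulo small numbers in the room states, with prisoners knowing $n$ and $r$ in their protocol. After a unique leader emerges, it knows this, and the remaining prisoners switch to follower roles. Finally, the leader runs the counting protocol from Theorem~\ref{three-four}-style strategies, or the \cite{KK21} $q\ge4$ strategy, and declares.
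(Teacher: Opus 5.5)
Your necessity argument is the paper's argument. After relabelling, your $d$ lockstep copies are the paper's blocks of size $d$, and rotating the class pairing is the paper's choice of $\sigma_\ell$, which no prisoner can detect. You still need to fix an intra-class schedule in which every prisoner index meets every room index under every rotation infinitely often, for example by indexing with the round number modulo the coprime numbers $n/d$ and $r/d$, but that is routine.

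The sufficiency direction has a genuine gap. Your Euclid-style elimination relies on synchronised phases (``one room per prisoner per phase'', ``the last partial phase''). In this asynchronous model no prisoner can tell that a phase has ended, since that is essentially the counting problem itself, and you leave this as ``the hard part'' without a mechanism. The natural asynchronous substitute is for each prisoner simply to try to claim $\lceil r/n\rceil$ rooms. This does not produce $r\bmod n$ survivors: with $n=3$ and $r=7$ the claims can end as $3,2,2$ or as $3,3,1$, so there may be one or two winners. Survivors therefore know only an upper bound on their number, and Euclid's recursion on exact residues cannot continue. Your $n>r$ step also fails as stated. If those without a room lose, exactly $r$ candidates face $r$ rooms, and $\hcf(r,r)=r$, so the reduction stalls because rooms cannot be discarded. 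Finally, a counter for the $\lfloor r/n\rfloor$ phases would have to live in the room states, so the number of states would not be a function of $\min(n,r)$ alone.

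The missing idea is to design the elimination so that only upper bounds are ever needed. For $r>n$ the paper uses levels $i=n,n-1,\dots,1$ and monotone switches. A prisoner at level $i$ tries to raise an $n$-state $T$-switch to value $n-i+1$ in $\lceil r/i\rceil$ rooms. When $i\mid r$ (so $i\nmid n$ by coprimality), the target is instead $\lceil n/i\rceil$ of the $n$ rooms marked via a binary $S$-switch. Since $i\lceil r/i\rceil>r$ (resp.\ $i\lceil n/i\rceil>n$), induction shows that between $1$ and $i$ prisoners reach level $i$ regardless of timing. The unique prisoner reaching level $1$ is the leader, and this uses $2n$ states.

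For $n>r$, the paper gets an exact count by conservation rather than by phases. The $r$ prisoners who first flip an $S$-switch are Type~I. Every prisoner makes $k$ flips $0\to1$ of a binary $T$-switch, and each Type~I prisoner attempts $m$ flips $1\to0$. Because $kn=mr-1$, exactly $r-1$ of the Type~I prisoners succeed, and the $r>n$ race is then run among these $r-1<r$ candidates. Lemma~\ref{leader-wins} turns the leader into a win with $4$ extra states, giving a bound that depends only on $\min(n,r)$.
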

	We break this into three propositions: first we show that the condition $\hcf(n,r)$ is necessary; then we give a strategy using $q=q(n)$ states that works for any coprime $r>n$; and finally a strategy using $q=q(r)$ states that works for any coprime $n>r$ by reducing to the previous case. 
	\begin{proposition}\label{not-coprime}If $\hcf(n,r)>1$ then the prisoners do not have a symmetric winning strategy for any $q$.\end{proposition}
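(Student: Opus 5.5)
Let $d=\hcf(n,r)>1$. We exhibit a symmetric warden schedule that keeps the whole system invariant under a cyclic symmetry of order $d$, so that no prisoner can ever distinguish the true history from one in which some prisoner has missed some room.

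Partition the prisoners into $d$ groups of size $n/d$ and the rooms into $d$ groups of size $r/d$. Index prisoners as $P_{i,j}$ and rooms as $R_{i,k}$, where $i\in\mathbb{Z}_d$ is the group, $j\le n/d$ and $k\le r/d$. The warden works in rounds. In a round she picks a pair $(j,k)$ and an offset $c\in\mathbb{Z}_d$, and then leads $P_{i,j}$ into $R_{i+c,k}$ for $i=0,1,\dots,d-1$ in turn. These $d$ visits touch distinct rooms. Since every prisoner runs the same deterministic protocol and all rooms start in state $0$, induction on the number of rounds gives the invariant
\[
\text{state}(R_{i,k})=\text{state}(R_{0,k}),\qquad \text{memory}(P_{i,j})=\text{memory}(P_{0,j})\quad\text{for all } i .
\]
Indeed, each visit in a round is the image of the first visit under the shift $i\mapsto i+1$ applied to both prisoners and rooms, so each prisoner sees the same state and makes the same change. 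In particular, if some prisoner ever declares, then at that same moment every prisoner in its orbit declares in an identical situation.

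Now fix an enumeration cycling through all triples $(j,k,c)$ infinitely often. Then every prisoner visits every room infinitely often, so the schedule is legal. Compare it with the modified schedule that uses the same $(j,k)$ at each round but always takes $c=0$. In the modified schedule, prisoner $P_{i,j}$ only ever enters rooms $R_{i,k}$, so it never visits room $R_{i+1,1}$, because $d>1$. By the invariant and the indistinguishability of rooms, every prisoner's sequence of observed states is identical under the two schedules. This holds because the state of $R_{i+c,k}$ equals that of $R_{i,k}$ at every time, again by induction.

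So the observations coincide at every time. Any declaration made in the first schedule would also be made in the modified schedule, where it is false. The modified schedule is not fair, but any finite prefix of it extends to a fair schedule, for example by continuing with the cycling schedule. Since a declaration happens at a finite time, it would be incorrect in some legal run. Hence no correct symmetric strategy ever declares, and the prisoners do not win.

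The main obstacle is making the indistinguishability claim airtight. We need the inductive invariant that the two schedules produce the same state sequences room-by-orbit. We also need to check that interleaving the $d$ visits within a round cannot break symmetry; this holds because the visited rooms are distinct and lie in distinct groups. Finally, the "prefix of an illegal schedule extends to a legal one" step must be handled, since a valid declaration must be correct against every legal continuation.
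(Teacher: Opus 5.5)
Your proof is correct and is essentially the paper's argument. The paper also partitions prisoners and rooms into orbits of size $d$, leads one orbit of $d$ prisoners into one orbit of $d$ rooms via a permutation $\sigma_\ell$ (you restrict to cyclic shifts $c$), and shows by the same induction that observations do not depend on these choices; your explicit finite-prefix extension step makes precise what the paper compresses into ``can be arbitrarily large.'' One small slip: the orbit-state invariant holds at round boundaries rather than ``at every time,'' but round boundaries are all your argument actually uses.
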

	\begin{proof}We first give a high-level description of the proof before formally specifying how the warden can defeat the prisoners. The idea is to divide the rooms, prisoners and time steps into blocks of size $d=\hcf(n,r)$. In each time block, the warden chooses a block of prisoners and a block of rooms and leads each prisoner from the chosen block into a different room from the chosen block. If the prisoners are using symmetric strategies, at the end of every time block rooms in the same block have the same state, and prisoners in the same block have seen the same sequence of states. The prisoners may have some information about which blocks of prisoners have visited which blocks of rooms, but they cannot deduce that an individual prisoner has visited an individual room.
		
		Let $d=\hcf(n,r)$, and number the prisoners with $[n]$ and the rooms with $[r]$ (here we write $[a]$ to mean $\{1,\ldots,a\}$ for $a\in \mathbb Z^+$). For each $\ell\geq 0$, choose a permutation $\sigma_\ell:[d]\to[d]$. We use these permutations to define a schedule as follows. At time $t$, set $\ell=\lfloor t/d \rfloor$, $k=t-\ell d$ (so $k\equiv t\pmod d$), and let $i\in [n/d]$ such that $i\equiv \ell\pmod{n/d}$ and $j\in [r/d]$ such that $j\equiv \ell\pmod{r/d}$. Now lead prisoner $id+k$ into room $jd+\sigma_{\ell}(k)$.
		
		Note that this is a valid schedule provided that for every $i\in[n/d]$, $j\in[r/d]$ and $k,k'\in[d]$ there are infinitely many $\ell$ such that $\ell\equiv i\pmod{n/d}$, $\ell\equiv j\pmod{r/d}$, and $\sigma_\ell(k)=k'$.
		
		Fix any symmetric strategy $\mathcal S$ for the prisoners. We claim that for each $t\equiv 0\pmod d$ and each $i\in [n/d]$ and $j\in[r/d]$ there are states $s(t,j,\mathcal S)$ and lists of states $L(t,i,\mathcal S)$, which do not depend on the choice of $\sigma_1,\sigma_2,\ldots$, such that after $t$ steps, for each $k\in [d]$, room $dj+k$ is in state $s(t,j,\mathcal S)$ and the list of states observed by prisoner $di+k$ is $L(t,i,\mathcal S)$. This follows by induction. It is true for $t=0$ with each state being $0$ and each list being empty. For $t>0$ we observe that between times $t-d$ and $t$, by the induction hypothesis, $d$ prisoners who have observed the same list of previous states are led into $d$ rooms in the same state; since they follow the same strategy, each changes that room to the same new state, irrespective of which prisoner is led into which room.
		
		Consequently the prisoners never obtain any information about the sequence $\sigma_1,\sigma_2,\ldots$, and since the first time by which all prisoners have been led into all rooms depends on these and can be arbitrarily large, they can never know that it has occurred.\end{proof}
	
	To show that the prisoners do have a symmetric winning strategy (for sufficiently large $q$) in all other cases, we first note that this essentially reduces to whether they can find a leader. 
	\begin{lemma}\label{leader-wins}Suppose that, for $n$ prisoners, $r$ rooms and $q$ states, the prisoners have a symmetric strategy ensuring that exactly one prisoner will at some point consider themselves a leader. Then $n$ prisoners have a symmetric winning strategy for $r$ rooms with $q+4$ states.\end{lemma}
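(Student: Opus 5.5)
Run the given leader-election strategy $\mathcal L$ on states $0,\ldots,q-1$, and reserve four fresh states $q,q+1,q+2,q+3$ for a second phase. Once elected, the leader carries out a counting protocol in the style of the four-prisoner protocol of Theorem~\ref{three-four}. Since every prisoner follows the same protocol, ``acting as leader'' is just a branch of that protocol, entered when a prisoner's own run of $\mathcal L$ tells them they are the leader. The plan has three parts. First, make the phases compatible, so that the leader can safely start phase two while other prisoners are still running $\mathcal L$. Second, give phase two, in which the leader marks every room and then collects tokens. Third, check correctness against an adversarial schedule.

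\textbf{Step 1 (interface).} I would use the usual ``first eventually stop'' trick. Every prisoner plays $\mathcal L$ and, on entering a room whose state is in $\{0,\ldots,q-1\}$, acts as $\mathcal L$ prescribes. A prisoner who sees a state $\ge q$ treats that room as frozen for $\mathcal L$ and will not touch it except as phase two allows. The difficulty is that the leader cannot simply overwrite the $\mathcal L$-state of a room: other prisoners may still be relying on $\mathcal L$ to learn that they are not the leader. However, the leader's election is irreversible and nobody else ever becomes leader, so once a leader exists the remaining prisoners need nothing more from $\mathcal L$. The rule I would adopt is this: the leader, on entering any room in a state $<q$, overwrites it with $q$. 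Any non-leader who sees a state $\ge q$ from then on stops running $\mathcal L$ and switches to phase two as a non-leader. A non-leader who is mid-$\mathcal L$ and keeps meeting $\mathcal L$-states is harmless, since he can never become leader. Non-leaders also never write $\mathcal L$-states into rooms holding states $\ge q$, so every room the leader has converted stays converted.

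\textbf{Step 2 (counting).} The leader visits rooms and converts each one to state $q$, meaning ``marked, awaiting tokens''. The leader cannot recognise rooms, so the marking count needs thought. The leader knows $r$. Every room she enters is either in a state $<q$, which she converts, or already $\ge q$. She counts conversions until she has converted $r$ rooms; at that point every room is in phase two and she has visited them all. The obstacle is that a non-leader might convert rooms too, so I forbid that: only the leader leaves $\{0,\ldots,q-1\}$. Next the leader must certify that each of the other $n-1$ prisoners has visited every room. For this I would reuse the sequential relay of Theorem~\ref{three-four}, generalised with a token mechanism: a room in state $q$ visited by a non-leader who has not yet ``spent'' on that visit pattern... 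This is where symmetry bites, because non-leaders are indistinguishable and one prisoner could be made to repeat. The standard remedy is to let non-leaders carry unary counters in the leader's absence. Each non-leader, once in phase two, marks $r$ rooms from $q$ to $q+1$, which certifies that he has seen all $r$ rooms, since the counting is done by a single prisoner. Then he hands off a single token to the leader through state $q+2$. The leader resets $q+1$ rooms back to $q$ and $q+2$ rooms to $q$, counting tokens. A prisoner who has finished should never act again. States $q+3$ can serve as a lock: while a non-leader is mid-marking, the leader must not reset, so I would serialise by having the leader flip one room to $q+3$ to grant permission. Only one permission exists at a time, and it is consumed by a single non-leader.

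\textbf{Step 3 (main obstacle).} The hard part is exactly this mutual exclusion with indistinguishable non-leaders and only four extra states. I would check that the single-permission token ensures that at most one non-leader is in the marking loop at a time, so that rooms in $q+1$ are attributable to him. I would also check that the leader, after receiving $n-1$ tokens, declares. At that point all $n-1$ others have each marked $r$ distinct rooms, and a room marked by a prisoner was necessarily entered by him. I would expect to fiddle with the exact state assignment to avoid deadlock.
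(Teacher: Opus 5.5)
Your overall architecture is the paper's. You run $\mathcal L$ on $\{0,\dots,q-1\}$, and the leader converts $r$ rooms to $q$; only she ever leaves $\{0,\dots,q-1\}$, so these are $r$ distinct rooms. Anyone who sees a state $\ge q$ without being leader switches to a non-leader role. Finally a four-state leader-based protocol runs on $\{q,\dots,q+3\}$. The gap is that this last protocol, the only part with real content, is never produced. Step 2 breaks off mid-sentence, and the rules are left open: what the permitted non-leader does with the room in $q+3$, when the leader may reset, and how many rooms she resets. Step 3 explicitly defers mutual exclusion and deadlock-freedom. The paper needs no new design here. It invokes the four-state winning strategy of \cite[Theorem 5]{KK21}. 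The feature that matters is that this strategy has a single leader and \emph{all other prisoners follow the same protocol}. So once a leader exists, it is compatible with symmetry and can be run verbatim on $\{q,\dots,q+3\}$.

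Your lock idea also works once it is pinned down. Let $q$, $q+1$, $q+2$, $q+3$ mean idle, marked, token and permission respectively. After her $r$ conversions, the leader repeats the following $n-1$ times: change a room from $q$ to $q+3$; wait for a room in $q+2$ and change it to $q$; change $r-1$ rooms from $q+1$ to $q$. She then declares. An unfinished non-leader waits for a room in $q+3$ and changes it to $q+1$. He then changes $r-2$ rooms from $q$ to $q+1$ and one further room from $q$ to $q+2$ (for $r=1$, $q+3$ goes straight to $q+2$), and never acts again.

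Each round starts with every room in $q$ except the single permission room. Once the permission is consumed, nobody but its consumer moves any room until the token appears. The leader is waiting for $q+2$, other non-leaders for $q+3$, finished ones are inert, and $\mathcal L$-players ignore states $\ge q$. So the consumer makes $r$ changes, each taking a room from $\{q,q+3\}$ into $\{q+1,q+2\}$, and no room returns during the round. These are therefore $r$ distinct rooms, and he has entered them all. Each non-leader completes at most one round, so $n-1$ tokens certify $n-1$ distinct prisoners. Progress holds because, while fewer than $n-1$ rounds are complete, some unfinished non-leader eventually enters the permission room and then finds the $r-1$ rooms in $q$ he needs. Either this verification or the citation completes the argument.
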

	\begin{proof}The prisoners first follow the hypothesised strategy using states $0,\ldots,q-1$. One prisoner eventually becomes the leader. The leader then flips every room they see in a state less than $q$ to state $q$, until this has been done $r$ times. 
		
		Recall that the prisoners have a winning strategy using $4$ states \cite[Theorem 5]{KK21}. Furthermore, this strategy involves one leader, with all other prisoners following the same protocol. We now use this strategy on states $\{q,q+1,q+2,q+3\}$: any prisoner who sees any of these states without having become the leader switches to the non-leader's protocol using those four states (ignoring any other states they see in the interim), whereas the leader, after moving $r$ rooms to state $q$, follows the leader's protocol on those four states.
	\end{proof}
	\begin{proposition}\label{n-smaller}
		For every $n$ there exists $q=q(n)$ such that for every $r>n$ with $\hcf(n,r)=1$, $n$ prisoners have a symmetric winning strategy for $r$ rooms with $q$ states.
	\end{proposition}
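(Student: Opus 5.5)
By Lemma~\ref{leader-wins}, it suffices to give a symmetric protocol, using a number of states depending only on $n$, which guarantees that exactly one prisoner eventually regards themselves as leader. The idea is to break symmetry through a token-sharing process that mimics the Euclidean algorithm on $(r,n)$; coprimality forces it to terminate at a single distinguished prisoner.

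\textbf{First round.} Rooms in state $0$ are treated as tokens. Write $r=an+b$, which every prisoner can compute since $r$ and $n$ are known. Since $r>n$ and $\hcf(n,r)=1$, we have $a\ge 1$ and $1\le b<n$. Each prisoner flips rooms from $0$ to $1$ until they have taken $a+1$ tokens, and stops there. No prisoner can be stuck with fewer than $a$ tokens while a state-$0$ room remains, so eventually all $r$ tokens are taken, every prisoner holds $a$ or $a+1$ of them, and exactly $b$ prisoners hold $a+1$. Those $b$ prisoners become \emph{candidates}. Each prisoner knows their own count but not when the round is over; this is the point needing care below.

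\textbf{Later rounds.} We then recurse on the pair $(n,b)$, with roles reversed. The new pool consists of $n$ tokens, one for each prisoner: every prisoner contributes one token by flipping a room into a fresh state. This is possible because $r>n$, so enough rooms are available. The $b$ candidates then share these $n$ tokens by the same rule as in the first round, which singles out $n\bmod b$ of them. Since $\hcf(n,b)=1$, the pair evolves exactly like the Euclidean algorithm, and the surviving number of candidates eventually becomes $1$. That prisoner declares themselves leader. The number of rounds is $O(\log n)$, or trivially at most $n$, and each round uses a constant number of fresh states. Hence $q(n)=O(n)$ states suffice independently of $r$, and Lemma~\ref{leader-wins} adds four more states.

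\textbf{Main obstacle: detecting the end of a round.} A prisoner who has taken their quota cannot tell whether others are still collecting. I would handle this in three ways.
\begin{itemize}
\item Each round has its own token state, and a prisoner only acts in round $m+1$ after seeing a state belonging to round $m+1$. Stale states are therefore never confused with current ones.
\item Contributions to a new pool are triggered by the holders of the largest quota, which only they can certify. For example, a prisoner completing a quota of $a+1$ knows some tokens were taken. More robustly, a prisoner who sees no token state while still needing tokens simply waits.
\item Tokens are not transferable back. Excess contributions are bounded because each prisoner contributes exactly once per round, so pool sizes are exactly as claimed.
\end{itemize}
I expect the detailed verification of this part to be where most of the work lies. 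Specifically, I must check that pool-building for round $m+1$ cannot begin while some round-$m$ token remains untaken. If necessary, I would have the last taker of a round, who holds the larger quota, recolour one room to a ``round complete'' state before anyone contributes.
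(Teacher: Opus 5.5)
There is a genuine gap in your first round, and the recursion inherits it. Your claim that every prisoner ends with $a$ or $a+1$ tokens, with exactly $b$ of them holding $a+1$, is false. Collection is a race, so fast prisoners can exhaust the pool before slow ones have taken $a$ tokens. Take $n=3$, $r=7$, so $a=2$ and $b=1$. The warden leads prisoner $1$ into three state-$0$ rooms, prisoner $2$ into three more, and prisoner $3$ into the last. The final counts are $3,3,1$: there are two candidates where your recursion expects $b=1$, so the protocol ends with two leaders.

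In general, the number $j$ of prisoners reaching $a+1$ can be any value with $b\le j\le\lfloor r/(a+1)\rfloor$, at the warden's choice. This range is a single value only when $b=n-1$, i.e.\ when the total shortfall $n(a+1)-r$ equals $1$. That is exactly what the $kn=mr-1$ trick in Proposition~\ref{r-smaller} engineers. The Euclidean step needs the exact count, both to compute the next quotient and remainder and to keep the pair coprime. An adversarially chosen $j$ need not even be coprime to $n$ (e.g.\ $n=4$, $r=9$, $j=2$), so the argument collapses.

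The missing idea is to track only an \emph{upper bound} on the number of surviving candidates and lower it by one at each stage. This is what the paper's ``race'' does. With at most $i$ prisoners still in contention, the target is $\lceil r/i\rceil$ rooms if $i\nmid r$. Otherwise it is $\lceil n/i\rceil$ of $n$ pre-marked rooms; coprimality forces $i\nmid n$ in that case, for $i>1$. Because the pool size is not a multiple of $i$, not all $i$ contenders can reach the target. At least one must, since otherwise at most $i(\lceil\cdot/i\rceil-1)$ pool rooms are used, fewer than the whole pool, so some contender can still make progress.

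This design also removes your unresolved end-of-round problem rather than solving it. Your proposed signal is not available anyway, since no prisoner can know they are the ``last taker''. At level $i$, a prisoner raises a monotone $T$-switch from any value at most $n-i$ to $n-i+1$. So every level's pool is simply all rooms (or all marked rooms), regardless of what slower prisoners are still doing. The $n$ marked rooms come from a separate binary $S$-switch, so creating them never overwrites untaken tokens, as your fresh-state contributions in later rounds could.
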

	\begin{proof}
		By Lemma \ref{leader-wins}, it is sufficient to show that the prisoners can find a leader using $q-4$ states. For simplicity of argument, we describe the leader-finding strategy using two switches in each room: a binary switch $S$ and an $n$-state switch $T$. These can be mapped onto states $0,\ldots, 2n-1$, giving $q=2n+4$.
		No prisoner will ever decrease the state of any $S$- or $T$-switch. When a prisoner first sees a room with the $S$-switch set to $0$, they move it to a $1$. They do this regardless of everything else. After this they never touch an $S$-switch again 
		Since $n < r$ this means that eventually $n$ rooms will be marked.
		
		We now use the following ``race'' strategy to find a leader. Each prisoner starts at level $n$. 
		On level $i$, each prisoner tries to move the $T$-switch in a certain number of rooms from any state that is at most $n-i$ to state $n-i+1$. 
		If $r/i$ is not an integer, prisoners on level $i$ attempt to do this in $\lceil r/i\rceil$ rooms. If $r/i$ is an integer, they attempt to do this in $\lceil n/i\rceil$
		rooms that have their $S$-switch set to $1$. Any prisoner who succeeds moves to level $i-1$.
		
		We argue by induction that at least one and at most $i$ prisoners reach level $i$. This is trivial
		for level $n$. Suppose this holds for $i > 1$. Since $\hcf(n, r) = 1$, one of $r/i$ and $n/i$ is not an integer,
		so either the target is $\lceil r/i\rceil$ and $i\lceil r/i\rceil>r$ or the target is $\lceil n/i\rceil$ marked rooms and $i\lceil n/i\rceil>n$; 
		in either case it is impossible for $i$ prisoners to reach the target. Suppose that no prisoners reach level $i-1$. By the induction
		hypothesis, at most $i$ prisoners (and at least one) reached level $i$. If the target is $\lceil r/i\rceil$, at least one of these prisoners must reach it,
		since otherwise the $T$-switches in at most $i(\lceil r/i\rceil-1) < r$ rooms reach state $n-i+1$ or above, and eventually one of the prisoners will be able to change another one.
		Similarly, if the target is $\lceil n/i\rceil$, eventually there will be $n$ marked rooms and thereafter at least one prisoner must reach the target from among these rooms.
		
		The unique prisoner who reaches level $1$ becomes the leader. 
	\end{proof}
	
	\begin{proposition}\label{r-smaller}
		For every $r$ there exists $q=q(r)$ such that for every $n>r$ with $\hcf(n,r)=1$, $n$ prisoners have a symmetric winning strategy for $r$ rooms with $q$ states.
	\end{proposition}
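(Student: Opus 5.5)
The plan is to reduce to the leader-finding problem via Lemma \ref{leader-wins}. It suffices to give a symmetric strategy using $q(r)-4$ states, independent of $n$, under which exactly one prisoner ever considers themselves a leader. The difficulty compared with Proposition \ref{n-smaller} is that the number of states may not depend on $n$, so prisoners cannot count levels up to $n$. The remedy is to first run a symmetric elimination that reduces the set of ``active'' prisoners to a group of size $m\le r$ with $\hcf(m,r)=1$. We then run the race strategy of Proposition \ref{n-smaller} among these $m$ active prisoners, which uses $O(r)$ states.

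For the reduction I would mimic the race argument but with counting in rooms rather than among prisoners. Each room carries a counter. Active prisoners claim rooms by moving a counter from one value to the next, and a prisoner who fails to claim is eliminated (made passive, acting thereafter only by the leader-independent protocols). Concretely, in round $j$ the active prisoners each try to claim one room at level $j$. Since there are only $r$ rooms, at most $r$ prisoners succeed. Those who see all rooms already advanced become passive. This does not quite work, because the survivors' count is only known to lie in $[1,r]$, not to be coprime to $r$.

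To control the residue, I would exploit $\hcf(n,r)=1$ through a ``pass-through'' counting mod $r$. Each room stores a value mod $r$ (plus a phase bit). Every prisoner, on their first visit in a phase, increments exactly one room's counter by one, so that after all $n$ have acted the total sum mod $r$ equals $n \bmod r$, which is nonzero and coprime to $r$. The hard part is turning this global residue information into a correctly sized active set without knowing $n$. The natural attempt is to let prisoners repeatedly pair up surplus: a prisoner entering a room whose counter is full ($=r$ claims) resets it and marks themselves eliminated. The ``survivors'' are then those whose claims sit in incompletely filled rooms. Their number is $n \bmod r$ plus a multiple of $r$ that must be squeezed to zero, which requires consolidating partial rooms. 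I expect this consolidation, in which rooms with partial counts must be merged by prisoners carrying tokens between rooms (a prisoner takes a count from one room and deposits it in another, using a few extra states to mark ``in transit''), to be the main obstacle. The issue is termination and correctness under adversarial scheduling, since prisoners cannot know when all $n$ have contributed. I would handle this by making the consolidation monotone, with total tokens conserved and the number of nonempty partial rooms nonincreasing. This should eventually leave at most one partial room holding $m\equiv n \pmod r$ tokens, with exactly the token-owners of that room remaining active.

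Given $m$ active prisoners with $1\le m<r$ and $\hcf(m,r)=1$, I would run the argument of Proposition \ref{n-smaller} with $m$ in place of $n$, using fresh states so that it does not interact with the earlier ones. This needs care: the active prisoners cannot know that the consolidation has ended, so the race must tolerate late entrants. Because the counters are monotone, I expect that allowing a prisoner to enter the race only once their status is final keeps the inductive bound ``at least one and at most $i$ reach level $i$'' valid. The total number of states is then a function of $r$ alone, and Lemma \ref{leader-wins} finishes the proof.
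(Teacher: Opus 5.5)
Your overall architecture is right: reduce via Lemma \ref{leader-wins} to leader election, shrink the field to fewer than $r$ candidates whose number is coprime to $r$, and then run the race of Proposition \ref{n-smaller} among them. However, the central step is missing: you never actually produce such a candidate set using a number of states depending only on $r$, and the mechanism you sketch cannot produce one.

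\emph{Your consolidation scheme relies on information no prisoner has.} ``The token-owners of that room'' are not identifiable. Rooms are indistinguishable and a counter is an anonymous count. So once a deposited token has been carried elsewhere, merged, or cancelled by a wrap-around mod $r$, the prisoner who deposited it cannot learn this. Similarly, ``enter the race only once their status is final'' cannot be implemented, because no prisoner can detect that consolidation has ended.

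\emph{Your candidate status is not monotone.} What the race needs is that candidate status is never revoked once acquired, and that the number of prisoners who ever acquire it is exactly the target. In your scheme a survivor (whose token sits in a partial room) can later be eliminated. An early entrant to the race could then break the bound that at most $i$ prisoners reach level $i$.

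\emph{The missing idea.} You need not aim for $n \bmod r$ survivors at all: $r-1$ is always coprime to $r$. The paper uses $\hcf(n,r)=1$ only through B\'ezout.
\begin{itemize}
\item Fix positive integers $k,m$ with $kn=mr-1$. These are stored in the prisoners' memories, not in switch states.
\item Each prisoner flips a binary switch $S$ from $0$ to $1$ once. Since $n>r$, this creates exactly $r$ Type I prisoners.
\item On a second binary switch $T$, every prisoner performs $k$ flips $0\to1$, and each Type I prisoner attempts $m$ flips $1\to0$.
\item The Type I prisoners together want $mr>kn$ down-flips. Hence every $T$-switch at $1$ is eventually flipped back, and exactly $kn=mr-1$ down-flips occur.
\item Each Type I prisoner makes at most $m$ down-flips, so exactly $r-1$ of them complete all $m$.
\end{itemize}
These $r-1$ prisoners become the candidates. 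Their status is monotone and at most $r-1$ prisoners ever hold it. So the race of Proposition \ref{n-smaller}, with $r-1$ in place of $n$, can run concurrently without harm, and the total number of states is $O(r)$.
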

	\begin{proof}This is known for $r=1$ (see \cite{Win04}), so we may assume $r\geq 2$.
		We give a protocol that ensures exactly $r-1$ prisoners eventually become candidate leaders. From here, we can use the strategy of Proposition \ref{n-smaller} to identify a single leader from the $r-1<r$ candidates. 
		
		We find candidate leaders using two binary switches labelled $S$ and $T$. The remaining process of finding a leader from the candidates can be achieved with an additional $2(r-1)$-state switch, for a total of $8r-8$ states, and then Lemma \ref{leader-wins} gives $q(r)=8r-4$.
		
		Since $\hcf(n,r)=1$, there exist positive integers $k,m$ such that $kn=mr-1$. Each prisoner does three things. 
		
		First, if they enter a room with the $S$-switch in position $0$ they switch it a $1$ and designate themselves Type I. This does \emph{not} need to be the first room they enter, but they only do this once. This means that eventually there will be exactly $r$ Type I prisoners.
		
		Secondly, each prisoner moves a $T$-switch from $0$ to $1$ on their first $k$ opportunities. They do this regardless of whether they are Type I, Type II (defined below), or neither.
		
		Thirdly, if they are of Type I then they attempt to move a $T$-switch from $1$ to $0$ exactly $\lceil \frac{kn}{r}\rceil=m$ times. If they succeed they designate themselves Type II. We permit (although this is not necessary) multiple moves in a single visit to a room, so a prisoner needing to move a $T$-switch in both directions will do this until they have completed at least one direction before leaving the room.
		
		We claim that exactly $r-1$ prisoners will designate themselves Type II. To see this, observe that every $T$-switch in state $1$ will eventually be moved to a $0$. This is because $T$-switches will be moved to state $1$ only $kn$ times in total and there are $r$ Type I prisoners each of whom is trying to move a $T$-switch from $1$ to $0$ on $m$ occasions, and $mr=kn+1$. Moreover, since none of the $r$ Type I prisoners does this more than $m$ times, exactly $r-1$ of them must do it exactly $m$ times, with the remaining prisoner only achieving $m-1$ flips.
		
		The $r-1$ Type II prisoners are the candidate leaders required.
	\end{proof}
	
	\begin{proof}[Proof of Theorem \ref{thm:coprime}]
		This is immediate from Propositions \ref{not-coprime}, \ref{n-smaller} and \ref{r-smaller}.
	\end{proof}
	
	\section{Open questions}
	The results of Section \ref{sec:three} show that up to five prisoners can win with three states, except possibly if there are exactly three rooms. 
	This raises the question of what happens with more prisoners.
	\begin{question}\label{five-three}Can five prisoners win if there are exactly three rooms and three states? If so, is there some value $n>5$ such that $n$ prisoners cannot win?
	\end{question}
	Note that a negative answer to the first part of Question \ref{five-three} would be particularly interesting, since it would show that there is no monotonicity in the number of rooms.
	\begin{question}Can more prisoners win if there are sufficiently many rooms? In particular, does there exist some function $f(n)$ such that $n$ prisoners can win in $r$ rooms with $3$ states provided $r\geq f(n)$?
	\end{question}
	Turning to the case of symmetric strategies, it is perhaps surprising that, in cases where the prisoners win for a sufficiently large number of states, that the number of states required (if all rooms are initially in the same state) does not increase unless \textit{both} $n$ and $r$ are large. However, we have not shown that arbitrarily large $q$ can be necessary even in that case.
	\begin{question}
		Is there some absolute constant $q_0$ such that whenever $\hcf(n,r)=1$ there is a symmetric winning strategy for $n$ prisoners in $r$ rooms with $q_0$ states, starting with all rooms in state $0$?
	\end{question}
	
	\section*{Acknowledgements}
	Part of this work was carried out while J.H. was supported by by the European Research Council under the European Union’s Horizon 2020 research and innovation programme (grant agreement no.\ 883810). For the purpose of open access, the authors have applied a Creative Commons Attribution (CC BY) licence to any Author Accepted Manuscript version arising from this work.

\end{document}